\newcommand{\ket}[1]{\left| #1 \right\rangle}
\newcommand{\ketnolr}[1]{| #1 \rangle}
\newcommand{\bra}[1]{\left\langle #1 \right|}
\newcommand{\braket}[2]{\left\langle #1 \middle| #2 \right\rangle}
\newcommand{\ketbra}[2]{\left| #1 \middle\rangle \middle\langle #2 \right|}
\newcommand{\Tr}{\text{Tr}}
\newcommand{\im}{\iota}
\theoremstyle{definition}
\newtheorem{theorem}{Theorem}
\newcommand{\problem}[4]{\textbf{Problem}: #1 \newline
\textbf{Input}: #2 \newline \textbf{Output}: #4
\vspace{0.3cm}}
\begin{document}

\title{Quantum linear algebra for disordered electrons}
\author{Jielun Chen}
\affiliation{Division of Physics, Mathematics and Astronomy, California Institute of Technology, Pasadena, California 91125, USA}
\author{Garnet Kin-Lic Chan}
\affiliation{Division of Chemistry and Chemical Engineering, California Institute of Technology, Pasadena, California 91125, USA}

\begin{abstract}
We describe how to use quantum linear algebra to simulate a physically realistic model of disordered non-interacting electrons. The physics of disordered electrons outside of one dimension challenges classical computation due to the critical nature of the Anderson localization transition or the presence of large localization lengths, while the atypical distribution of the local density of states limits the power of disorder averaged approaches. Starting from the block-encoding of a disordered non-interacting Hamiltonian, we describe how to simulate key physical quantities, including the reduced density matrix, Green's function, and local density of states, as well as bulk-averaged observables such as the linear conductivity, using the quantum singular value transformation, quantum amplitude estimation, and trace estimation. We further discuss a quantum advantage that scales polynomially with system size and exponentially with lattice dimension.
\end{abstract}

\maketitle

\section{Introduction}
Quantum linear algebra provides a set of primitives for matrix computation on quantum computers \cite{Harrow_2009, Gilyen_2019, Martyn_2021}. For structured matrices with efficient block-encodings, this allows linear algebra to be performed on matrices with dimension $N=2^n$, i.e., exponential in the number of qubits $n$. In quantum simulation, most applications of quantum linear algebra focus on quantum many-body simulation, where the qubit Hilbert space represents the exponential many-body Hilbert space. Some recent works~\cite{babbush2023exponential,Barthe_2024,Somma_2024,stroeks2024solving}, however, have suggested the use of quantum linear algebra for simulations of an exponential number of non-interacting modes. A natural question in that context is whether there are interesting physical simulations which can be performed in this setting.

In the current work, we study one such potential application. Concretely, we consider the physics of non-interacting electrons (fermions) on a lattice. In perfect crystals, the translational symmetry permits a variety of efficient algorithms to obtain properties. However, a disordered material is one with an infinite-sized unit cell. Disorder may arise through impurities, either as intentional or non-intentional dopants~\cite{ziman1979models}, or through the structural parameters~\cite{lau2022reproducibility}. The pioneering work of Anderson \cite{anderson1958absence} showed that non-interacting electronic states are localized in the presence of sufficiently strong disorder, and a rich theory of Anderson localization has been developed in one, two, and three-dimensional lattices \cite{lee1985disordered}. However, the localization length is exponential in the conductance in two dimensions, while the localization transition is critical in three dimensions. Thus while numerics have provided much insight, the appearance of exponential length scales provides a challenge to determining exponents to high precision, as well as to studying systems where the underlying non-disordered lattice also has a large unit cell, as found in Moir\'{e} lattices. In addition, near the localization transition, the local density of states of a disordered material is not self-averaging; for example, the arithmetic disorder average does not serve as an order parameter~\cite{dobrosavljevic2003typical}. This limits the information provided by mean-field treatments based on disorder-averaged quantities. The ability to compute the properties of an actual realization of disorder on large system sizes is thus desirable.


We describe how to use quantum linear algebra to study disordered non-interacting electrons. We focus on uncorrelated disorder where each atom's data (e.g. type) is drawn independently from a distribution, and describe how to efficiently block encode a physically realistic disorder Hamiltonian via cryptographic tools that simulate random functions \cite{Zhandry_2012, Incompressibility, compressed_oracle}. Once this block encoding is computed, we apply the quantum singular value transform \cite{Gilyen_2019}
to compute the physical quantities, such as the reduced density matrix, single-particle Green's function, and local density of states. Finally, we also describe how to obtain bulk-averaged quantities, such as the linear conductivity, through quantum trace estimation.

We emphasize that the ability to simulate an exponentially large system does not imply exponential quantum advantage. In fact, for lattice systems where the Hamiltonian is geometrically local, as is the case for realistic disorder models, a class of classical algorithms can avoid exponential scaling for certain tasks. Additionally, to capture correlations at diverging length scales, the complexity of both quantum and classical algorithms may need to scale exponentially in $n$. However, a general polynomial advantage (with respect to classical algorithms we are aware of) can be established for realistic disorder models beyond 1D. We make this comparison towards the end of the paper.

There are prior works that are closely related to our presentation. For example, Ref.~\cite{alexandru2020quantum} highlighted the challenge of simulating critical localization transitions and subsequently described a Trotter-based algorithm for the time-evolution of a disorder Hamiltonian for exponentially large numbers of fermions. In the course of preparing our manuscript, the preprint~\cite{stroeks2024solving} appeared. While focused on the general setting of quantum linear algebra for an exponential number of fermion modes, Ref.~\cite{stroeks2024solving} also briefly discussed disorder Hamiltonians as an application enabled by the use of pseudorandom functions. Our work uses similar quantum linear algebra techniques to Ref.~\cite{stroeks2024solving}, but we obtain improved complexity estimates for these operations, and we provide the explicit block-encodings and oracle constructions for a physically realistic disorder model. 

\section{Non-interacting disordered systems}
Consider a system of $N=2^n$ atoms arranged on a crystalline lattice of dimension $D$, where each atom is labeled by index $i \in \{0,...,N-1\}$. For simplicity, we assume only one orbital per atom. 
This leads to the free-fermion Hamiltonian
\begin{equation}
    \hat{H} = \sum_{i=0}^{N-1} \sum_{j=0}^{N-1} h_{ij} c_i^\dagger c_j
\end{equation}
where $c_i/c_i^\dagger$ is the fermionic annihilation/creation operator. The $N \times N$ matrix $h$ is denoted the \emph{hopping matrix} (the diagonal terms $h_{ii}$ are sometimes called on-site energies, but we do not differentiate them from the off-diagonal terms here). 

Computation involving an arbitrary matrix $h$ can encode BQP-complete problems. However, here we are interested in matrix elements that arise in physical problems of disorder. To understand their form, imagine that each atom hosts an orbital that is exponentially localized around the atomic position, and define a basis from the set of all such orbitals $\{\phi_i\}$. Then $h_{ij}$ is the matrix element of a single-electron Hamiltonian $\hat{H}_0$ within the basis,  $h_{ij} = \int d \mathbf{r} \phi_i^*(\mathbf{r}) \hat{H}_0 \phi_j(\mathbf{r})$, 
and typically has the form \cite{Turchi_1998_tight}
\begin{equation}
    |h_{ij}| \sim e^{-\gamma |\mathbf{r}_i - \mathbf{r}_j|}
\end{equation}
where $\mathbf{r}_i$ is the coordinate of atom $i$, and $1/\gamma$ is the length-scale of the atomic orbital.
Since the hopping matrix elements decay rapidly with distance, the matrix $h$ can be truncated at finite length with exponentially small error, resulting in a sparse $h$ with each row containing only a few non-zero entries.
Furthermore, assuming $h_{ii}$ is bounded, the circle theorem implies that the spectral width of $h$ remains bounded independently of the system size $N$. 
An important point is that in a crystalline system, the elements $h_{ij}$ can be specified with $O(1)$ input data, as $\mathbf{r}_i$ is generated from the unit cell atomic positions and lattice vectors. 
Compact inputs can also specify other contributions to $h$, for example from external fields, or long wavelength deformations of the lattice~\cite{pereira2009tight}.


The corresponding disorder hopping matrix can be viewed as drawing entries of $h_{ij}$ from a certain random ensemble. A very common kind of disorder is when the atomic data can take multiple values associated with a probability distribution. For example, the type of atom can vary (modifying $\gamma$), the coordinate can be associated with a random displacement (modifying $\mathbf{r}_i$), or a random magnetic field could be present on each atom. We focus on these types of disorder in this work.

\section{Quantum linear algebra}
The physical properties of the free-fermion system are entirely encoded in matrix functions of the hopping matrix $h$, and computing such functions is a natural task for quantum linear algebra. We will employ block-encoding, the quantum singular value transform, and quantum amplitude estimation. An $\alpha$-scaled \emph{block-encoding} of $h$ is a unitary $U_h$ such that its submatrix is $h/\alpha$:
\begin{equation}
    U_h = 
    \begin{pmatrix}
        h/\alpha & *\\
        * &  *
    \end{pmatrix}
\end{equation}
where $\alpha$ is called the \emph{sub-normalization}, which ensures $\|h/\alpha\|_2 \leq 1$ such that the unitary exists. In our case, $h$ is a structured sparse matrix, thus $\alpha$ is simply the sparsity times $h$'s maximal element, and we can use sparse block-encoding techniques \cite{Gilyen_2019, Camps_2023}. 

The \emph{quantum singular value transform} (QSVT) constructs matrix functions given a block-encoded matrix. Specifically, given the block-encoding of $h/\alpha$, the QSVT constructs a new unitary $U^{p}_{h}$ such that
\begin{equation}
    U_{h}^{p} = 
    \begin{pmatrix}
        p(h/\alpha) & *\\
        * &  *
    \end{pmatrix}
\end{equation}
where $p$ is a degree-$d$ polynomial of the singular values of $h/\alpha$ (which are here eigenvalues since $h$ is Hermitian).
It suffices to prove that $p$ is uniformly bounded by a constant for QSVT to be efficiently applied \cite{Low_2016, Low_2017, Gilyen_2019, Martyn_2021, tang2023csguidequantumsingular, Motlagh_2024}, requiring only $O(d)$ queries to $U_h$ and $U_{h}^\dagger$ with a constant number of ancilla qubits. We highlight that to approximate $f(x) \approx p(x/\alpha)$, it is necessary to consider $f(x)$ on $[-\alpha, \alpha]$, which is then rescaled to $[-1,1]$, inducing overhead in two ways: first, the uniform norm of $f$ may scale with $\alpha$, which in the worst case induces an overhead of $\alpha^d$ in post-selection (e.g. for a monomial $x^d$); second, the rescaled $f$ is less smooth and requires a higher degree to approximate it. In our applications, these issues are negligible as the functions are bounded everywhere and the degree grows only linearly with $\alpha$.

The last step is to extract information from the block-encoded matrix function $f(h)$. This could be a bottleneck for some quantum algorithms since the amount of information required might scale with $N$. However, in the thermodynamic limit, one is only interested in intensive quantities, 
which makes extracting useful information possible. For example, we will be interested in extracting the matrix element $[f(h)]_{ij}$, which is the form of the local density of states or a local one-body observable, and which can be performed by \emph{quantum amplitude estimation} (QAE) \cite{Brassard_2002}. State-of-the-art variants of the algorithm can estimate such quantities within additive error $\epsilon$ with probability at least $1-\delta$ in $O(\frac{1}{\epsilon}\log(\frac{1}{\delta}))$ queries with various tradeoffs on depth, flexibility, etc. \cite{Aaronson_2020, Grinko_2021, Fukuzawa_2023, Rall_2023, Harrow_2020, Cornelissen_2023, QA_survey}. Additionally, we are interested in bulk-averaged quantities of the form $\Tr(f(h))/N$, such as the energy per site or the linear conductivity, which via the transformation
\begin{equation}
    \frac{\Tr(f(h))}{N} = \left(\frac{1}{\sqrt{N}}\sum_{i=0}^N \bra{i} \bra{i}\right) I \otimes f(h) \left(\frac{1}{\sqrt{N}}\sum_{j=0}^N \ket{j} \ket{j}\right)
\end{equation}
can again be obtained by applying QAE. We could also use stochastic trace estimation which can be more resource-efficient. There one draws $K$ random states $\ket{\psi_k}$ from an ensemble with the property $\mathbb{E}_k[\bra{\psi_k}f(h)\ket{\psi_k}] = \Tr(f(h))/N$, then one estimates $\bra{\psi_k}f(h)\ket{\psi_k}$
by QAE, and takes the mean to estimate the trace. Although this introduces an additional stochastic error, many good choices of ensemble $\{\ket{\psi_k}\}$, such as Hutchinson states \cite{Shen_2024} or 2-designs, cause the variance to vanish as $O(\|f(h)\|_F^2/N^2)$, and for any block-encodable matrix $f(h)$ we have $\|f(h)\|_F^2=O(N)$ or smaller, thus the stochastic error converges like $O(1/N)$, and for $N=2^n$ the error of QAE dominates the cost.

\section{Block-encoding the disordered hopping matrix}
We start by focusing on the binary alloy model \cite{Saha_1996}, which is a type of substitutional disorder where each atom randomly takes one of two atom types with probability $p$ and $1-p$ correspondingly. We then describe how to extend the same techniques to structural and magnetic disorders.

In the random binary alloy, for each $(i, j)$ the hopping matrix coefficients are now denoted as $h_{ij}^{ab}$, which takes four possible values depending on the atom types $a,b \in \{0,1\}$: $h_{ij}^{00}$, $h_{ij}^{11}$, and $h_{ij}^{01} = (h_{ij}^{10})^{*}$. There are $2^N$ total configurations of the atom type distribution, which we label as $A \in \{0,...,2^N-1\}$, and we associate each with a function $g_A: \{0,1\}^n \rightarrow \{0,1\}$ which maps the atom label to the atom type. We denote the hopping matrix specified by the configuration as $[h^{A}]_{ij} = h_{ij}^{g_A(i)g_A(j)}$. Suppose we encode all possible hopping terms into a $2N \times 2N$ matrix $\tilde{h}_{(i,a),(j,b)}$, then by defining the diagonal projector $P^A_{(i,a)} = \delta_{(i,a),(i,g_A(i))}$, we have
\begin{equation}
    [\tilde{h}^A]_{(i,a), (j,b)} = P^A_{(i,a)}~ \tilde{h}_{(i,a), (j,b)}~P^A_{(j,b)}
\end{equation}
which effectively projects $\tilde{h}$ to elements relevant to $h^A$. Then it is clear that any matrix power of $h^A$ can be expressed as
\begin{equation}
    \left(h^A\right)^d= 2\left( I \otimes \bra{+}\right)\left(\tilde{h}^A\right)^d\left(I \otimes \ket{+}\right)
\end{equation}
where $\ket{+} = (\ket{0} + \ket{1})/\sqrt{2}$. Thus for polynomial transformations, it is equivalent to consider the block-encoding of $\tilde{h}$ and $P^A$, up to a sub-normalization overhead of $2$.

We first describe the block-encoding of $\tilde{h}$ which encodes all possible hopping terms. As introduced earlier, we focus on the following form
\begin{equation}
    \tilde{h}_{(i,a),(j,b)} = t_{ab}e^{-{\gamma}_{ab} |\mathbf{r}_i - \mathbf{r}_j|} e^{\im \phi_{ij}}
\end{equation}
which is a sparse matrix with at most $2s$ elements on each row. 
Here, the parameters $t_{ab}$ and $\gamma_{ab}$ are often fit from experiments or ab initio calculations \cite{Turchi_1998_tight}. We also include the Peierls phase $\phi_{ij}$ which can account for 
magnetic fields which are sometimes studied together with disorder \cite{Fisher_1985}. We assume that the interatomic distances and the Peierls phases are efficiently computable, and that they can be specified with finite precision by $m$ bits, enabling the sparse block-encoding of $\tilde{h}$ \cite{Gilyen_2019, Camps_2023}. Specifically, consider an oracle, which when given a column index $j \in \{0,...,N-1\}$ and $l \in \{0,...,s-1\}$, computes:
\begin{equation}
    O_c:\ket{l}\ket{j} \rightarrow \ket{l}\ket{c(j,l)}
\end{equation}
where $c(j,l)$ is the index of the $l$-th sparse element in $j$-th row.
The efficient computation of interatomic distance and phase implies there exists two poly($n$)-time oracles
\begin{equation}
\begin{split}
    &O_d:\ket{l}\ket{j} \ket{0} \rightarrow \ket{l}\ket{j} \ket{\tilde{d}_{lj}}\\
    &O_\phi:\ket{l}\ket{j} \ket{0} \rightarrow \ket{l}\ket{j} \ket{\tilde{\phi}_{lj}}
\end{split}
\end{equation}
where $\tilde{d}_{lj}$ is the $m$-bit integer representation of $d_{c(j,l)j}=|\mathbf{r}_{c(j,l)} - \mathbf{r}_{j}|$, the distance between atom $j$ and atom $i = c(j,l)$. In other words, $\tilde{d}_{lj} = \left\lfloor M d_{c(j,l)j}/d_{\max}  \right\rfloor$, where $M=2^m$ and $d_{\max}$ is the upper bound on the maximal distance between two atoms. For convenience, we also introduce the rescaled quantity $\tilde{\gamma} = \gamma d_{\max}/M$ so that $\tilde{\gamma} \tilde{d}_{lj} = \gamma d_{c(j,l)j}$. Similarly, $\tilde{\phi}_{lj} = \left\lfloor M\phi_{c(j,l)j}/2\pi \right\rfloor$. Now we need to construct the oracle for the exponential distance factor and the phase oracle
\begin{equation}
\begin{split}
    &O_{e}:\ket{\tilde{d}_{lj}}\ket{0} \rightarrow \ket{\tilde{d}_{lj}} (t e^{-\tilde{\gamma} \tilde{d}_{lj}}\ket{0} + \sqrt{1-t^2e^{-2\tilde{\gamma} \tilde{d}_{lj}}}\ket{0^{\perp}})\\
    &O_{p}:\ket{\tilde{\phi}_{lj}} \rightarrow e^{\im 2\pi \tilde{\phi}_{lj}/M} \ket{\tilde{\phi}_{lj}} 
\end{split}
\end{equation}
where $\braket{0}{0^{\perp}} = 0$. These oracles allow us to block-encode the matrix $te^{-\gamma d_{ij}} e^{\im \phi_{ij}}$ using the circuit shown in Fig.~\ref{fig:exp_oracle} (a). For the text below we assume $|t|\leq1$, otherwise it enters into the sub-normalization as $|t|$. 

We now describe how to construct $O_{e}$ and $O_p$. In general, an oracle performing the classical computation of the matrix elements can be used to construct the block-encoding \cite{Sanders_2019}. Here, using the exponential form, we can significantly simplify the block-encoding. Consider the binary expansion
$\tilde{d}_{lj} = \sum_{k=0}^{m-1}2^{k} \tilde{d}_{lj}^k$ where $\tilde{d}_{lj}^k \in \{0,1\}$, then the exponential function can be decomposed as a product $e^{-\tilde{\gamma} \tilde{d}_{lj}} = \prod_{k=0}^{m-1} e^{-\tilde{\gamma} 2^{k} \tilde{d}_{lj}^k}$. The same applies to $e^{\im 2\pi \tilde{\phi}_{lj}/M}$. Using this property, we define the operator $A_k$ and its corresponding block-encoding $U_k$, and the phase gate $R_k$:
\begin{equation}
\begin{split}
&A_{k}=t^{1/m} 
\begin{pmatrix}
    1 & 0 \\
    0 & e^{-\tilde{\gamma} 2^{k}}
\end{pmatrix},
\quad
U_{k}=
\begin{pmatrix}
    A_{k } & * \\
    * & *
\end{pmatrix}\\
&R_k=
    \begin{pmatrix}
        1 & 0\\
        0 & e^{\im 2\pi 2^k/M}
    \end{pmatrix}
\end{split}
\end{equation}
Then $O_e = P^\dagger(\bigotimes_{k=0}^{m-1} U_{k})P$ where $P$ permutes $\ketnolr{\tilde{d}_{lj}}\ket{0^{\otimes m}}$ into $\ket{0}\ketnolr{\tilde{d}_{lj}^0}...\ket{0}\ketnolr{\tilde{d}_{lj}^{m-1}}$, and $O_p = \prod_{k=0}^{m-1} R_k$. In order to block-encode $t_{ab}e^{-\gamma_{ab} d_{ij}}$, we introduce two ancilla qubits acting as selectors, and construct $U_{\gamma,k}^{ab}$ which block-encodes
\begin{equation}
    A_{k}^{ab} = t_{ab}^{1/m} 
\begin{pmatrix}
    1 & 0 \\
    0 & e^{-\tilde{\gamma}_{ab} 2^{k}}
\end{pmatrix}.
\end{equation}
and use this to construct the new oracle $O_e^{ab}$. Then we use the circuit shown in Fig.~\ref{fig:exp_oracle} (b) for block-encoding, introducing an extra factor of $2$ to the sub-normalization. 

\begin{figure}
    \centering
    \includegraphics[width=1\linewidth]{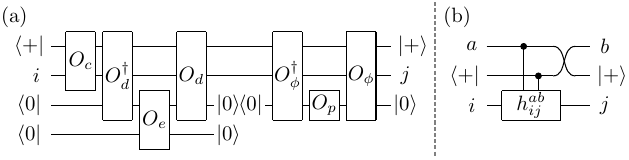}
    \caption{(The circuits read from right to left) (a) The full block encoding of an $s$-sparse matrix $te^{-\gamma d_{ij}}e^{\im \phi_{ij}}$, with sub-normalization $\alpha=s \max(1,|t|)$, where sparsity comes in because one inputs $\ket{+}=\sum_{l=0}^{s-1}\ket{l}/\sqrt{s}$. (b) Using two ancilla qubits to block-encode $h_{ij}^{ab}= t_{ab}e^{-\gamma_{ab} d_{ij}}e^{\im \phi_{ij}}$ with sub-normalization $\alpha=2 s \max(1,\max_{ab}|t_{ab}|)$.}
    \label{fig:exp_oracle}
\end{figure} 

Next, we discuss how to block-encode $P^A$ which specifies the disorder configuration. Directly defining $g_A$ coherently requires $O(N)$ amount of information, potentially diminishing the advantage of quantum algorithms. We describe two resolutions. Firstly, suppose $d = \text{poly}(n)$, then one can use (quantum-secure) pseudorandom functions (PRFs) \cite{GGM_1986, NR_1999, Zhandry_2012}. A PRF is a keyed function $f_k$, such that for a random key $k$ drawn from a key space $\mathcal{K}$, $f_k$ is indistinguishable from a uniformly random function, under $\text{poly}(n)$ (quantum) queries to its black-box. For simplicity, we assume input space = key space = output space, i.e $f_k:\{0,1\}^n\rightarrow \{0,1\}^n$ with $k \in \{0,...,N-1\}$. We define the corresponding reversible PRF as the mapping
\begin{equation}
    F_k: \ket{j}\ket{0} \rightarrow \ket{j}\ket{f_k(j)}.
\end{equation}
To indicate the atom type, we need to infer one bit from the output $f_k(j)$, such that the bit is $0$ with probability $1-p$ and $1$ with probability $p$. This can be implemented through inverse transform sampling. Suppose $f_k$ is truly random, then for each output $o = f_k(j)$, which uses the comparator circuit
\begin{equation}
    C_p: \ket{o}\ket{0} \rightarrow 
    \ket{o} \ket{c_p(o)}
\end{equation}
where $c_p(o) = 0$ if $o \ge  pN$ and $1 $ if $o < pN$. This can be implemented with $O(n)$ gates and ancillas \cite{Cuccaro_2004, Gidney_2018}. This effectively draws a random Boolean function taking $j$ as the input and with $c_p(f_k(j))$ as the output, with probability $p^w(1-p)^{N-w}$, where $w$ is the number of inputs such that the Boolean function's output is $1$. Since we used $f_k$ as a black box, this algorithm preserves the security of the PRF. To block-encode the projector, we compute $c_p(f_k(j)) \oplus b$ which can be achieved by a single CNOT. The full circuit for block-encoding is shown in Fig.~\ref{fig:PRF_H}. This method extends to any independent distribution with atom-specific probabilities $p_i$, as long as the oracle from $i$ to $p_i$ can be efficiently constructed.

However, suppose the polynomial degree required is superpolynomial in $n$, then the PRF may no longer be a reliable substitute for a random function. 
This is relevant if we wish to compute the tail of a correlation function such as $\langle c^\dag_i  c_j\rangle$ for systems with large localization lengths, and where we wish to consider $i$ and $j$ separated by a distance on the order of the system size. Then, to compute a non-trivial value for this element we would need to use $d = O(2^{\nu n})$ for some $\nu<1$. In this case, one can replace $f_k$ with $t$-wise independent functions, which are indistinguishable from truly random functions under $t$ queries. One can construct a $t$-wise independent function with $\text{poly}(n)\tilde{O}(t)$ random gates, each acting on a constant number of bits \cite{Incompressibility}. Thus an extra $\tilde{O}(d)$ multiplicative overhead is introduced to the time complexity. One can trade this for space complexity using a compressed oracle \cite{Zhandry_2012}, which constructs a perfect simulation of random functions under $d$ queries using $\tilde{O}(d)$ ancillas but only $\text{poly}(n)$ time. We emphasize that in this regime, the entire quantum algorithm scales superpolynomially in $n$, and thus cannot be viewed as an algorithm that treats exponentially many modes efficiently. However, the general polynomial advantage over classical algorithms still holds, which is discussed towards the end of the paper.

The above construction straightforwardly extends to encode structural and magnetic disorder. At the step of constructing $O_e$ in the block-encoding of $\tilde{h}$, one can modify the distance by a pseudorandom function, $t$-wise independent function, or compressed oracle. For example, for a 1D chain, one simply computes $\ket{x_i} \rightarrow \ket{x_i+f_k(i)}$ to account for a uniformly distributed random displacement around the lattice position (with $f_k$ modified to have $m'$-bit output where $m'$ is typically less than $m$ to control the width of the displacement. Non-uniform distributions can be again implemented through inverse transform sampling). The extension to 2D or 3D is straightforward. For random magnetic fields we consider the simplest model where $\phi_{ij}$ is uniformly drawn from $[0,2\pi)$ \cite{Avishai_1993}, thus in $O_p$ one can simply replace $\ketnolr{\tilde{\phi}_{lj}}$ with $\ket{f_k(sj+l)}$ where the input and output are modified to contain $n+\lceil \log(s)\rceil$ bits and $m$ bits respectively.

\begin{figure}[t]
    \centering
    \includegraphics[width=1\linewidth]{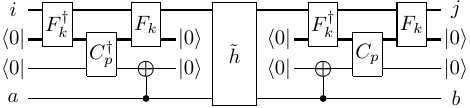}
    \caption{(The circuit reads from right to left) Block-encoding of a disordered Hamiltonian instance $\tilde{h}^k$ where $F_k$ is a pseudorandom function. If $d$ is superpolynomial in $n$, we replace $F_k$ with a $d$-wise independent function or compressed oracle.}
    \label{fig:PRF_H}
\end{figure}

\section{Estimating physical quantities}
We consider estimating physical observables with respect to the grand canonical ensemble density matrix defined as $\hat{\rho} = \exp(-\beta(\hat{H}-\mu\hat{N}))/Z$, where $\beta$ is the inverse temperature, $\mu$ is the chemical potential, $\hat{N}$ is the particle number operator, and $Z=\Tr(\exp(-\beta(\hat{H}-\mu\hat{N})))$ is the partition function. For the text below we simply write $h$ as the disordered hopping matrix in place of $\tilde{h}^A$ as introduced previously, to simplify notation. In general, the total cost for estimating physical quantities will be the product of the query complexities of constructing observables through QSVT, the time to block-encode $(h-\mu I)/\alpha$, and the QAE overhead $O(\frac{1}{\epsilon}\log(\frac{1}{\delta}))$, as well constants involving the sub-normalization. For clarity, we only present the quality of the polynomial approximation relevant to QSVT and procedures for estimating quantities. The total cost can be obtained from simple calculations, and we present proofs for degree bounds in Appendix.~\ref{app:chebyshev}. 

The static properties of a free fermion system at thermal equilibrium are fully specified by its one-body reduced density matrix (1-RDM) (sometimes referred to as the correlation matrix), defined as $D_{ij}= \Tr(c_j^\dagger c_i\hat{\rho})$. The Fermi-Dirac distribution expresses this as a matrix function of $h$:
\begin{equation}
    D_{ij} = \left[f_{\beta}(h-\mu I)\right]_{ij} = \left[\frac{1}{e^{\beta(h-\mu I)}+I}\right]_{ij}
\end{equation}
with $I$ being the identity. Given the block-encoding of $(h-\mu I)/\alpha$, one can use QSVT to block-encode a polynomial $p_f$ with degree $d_f = O(\alpha \beta \log(\frac{\alpha \beta}{\epsilon}))$ such that $\|p_f((h-\mu I)/\alpha) - D\|_2 \leq \epsilon$. Then any local one-body observable $\hat{O}= \sum_{ij}O_{ij}c_i^\dagger c_j$ can be computed via $\Tr(\hat{O}\hat{\rho}) = \sum_{ij}O_{ij} D_{ji}$. Since $\hat{O}$ is local, $O$ only contains a constant number of non-zero elements, thus one can estimate each of the  $D_{ji}$ individually and sum them together. 

An alternative approach is to encode observables in a quantum state using techniques like shadow Hamiltonian simulation \cite{Somma_2024}. For instance, the 1-RDM can be represented as $\ket{\psi_D} = \sum_{ij} D_{ij}/\|D\|_F \ket{i} \ket{j}$. However, the Frobenius-norm normalization scales with the particle number which is $O(N)$, while block-encodings normalize the operator by its sparsity. This means that when extracting a single element, the block-encoding strategy can be more efficient. Furthermore, given a block-encoding of $D$, $\ket{\psi_D}$ can be prepared with probabilities $\|D\|_F/N$. Thus for operators with extensive Frobenius-norm, as in our applications, the block-encoding we adopt is often more favorable.

Intensive bulk-averaged quantities $\Tr(OD)/N$, such as $O=h$ which gives the total energy per site, can 
be estimated 
by trace estimation techniques as described earlier. On the other hand, the dynamical properties of a free fermion system at thermal equilibrium can be fully captured by the single-particle Green's function. Among the variants, the retarded Green's function 
\begin{equation}
    G^R_{ij}(t) = -\im \theta(t) \Tr\left(\left\{c_i(t), c_j^\dagger (0)\right\} \hat{\rho} \right)
\end{equation}
is the simplest in the free-fermion setting, where $c_j(t) = e^{\im (\hat{H} - \mu \hat{N}) t}c_j e^{-\im (\hat{H} - \mu \hat{N}) t}$ is the annihilation operator in the Heisenberg picture, $\theta$ is the Heaviside step function, and $\{.,.\}$ denotes the anti-commutator. In the frequency domain, the retarded Green's function is the following  matrix function of $h$:
\begin{equation}
\begin{split}
    G^R_{ij}(\omega)&= \int_{-\infty}^{\infty} dt~e^{\im\omega t}e^{-\eta t} G^R_{ij}(t)\\
    &=\left[\frac{1}{(\omega+\im \eta)I-(h-\mu I)}\right]_{ij}
\end{split}
\end{equation}
where $\eta$ is the broadening parameter determining the resolution in the energy domain. In physical studies, finite experimental resolution implies $\eta$ is lower-bounded by a constant, thus the Chebyshev expansion converges quickly. Specifically, given the block-encoding of $(h-\mu I)/\alpha$, one can use QSVT to construct a polynomial $p_g$ with degree $d_g = O(\frac{\alpha}{\eta} \log(\frac{1}{\epsilon}))$ such that $\|p_g((h-\mu I)/\alpha) - \eta G^R(\omega)\|_2 \leq \epsilon$. We highlight the introduction of the sub-normalization $1/\eta$ to ensure that the 2-norm is bounded by $1$, which incurs overhead in the QAE.

In studies of Anderson localization, an important quantity is the imaginary part of the retarded Green's function, referred to as the spectral function $S_{ij}(\omega) = -\text{Im}G^R_{ij}(\omega)/\pi$, which is a Lorentzian function with $\eta$ specifying the width. The spectral function can be approximated by $\|-\text{Im}p_g((h-\mu I)/\alpha)-\pi\eta S(\omega)\|_2\leq\epsilon$. The real-space local density of states (LDOS) at atom $i$ is given by the diagonal elements  $S(\omega)_{ii}$ which can be estimated by QAE. The momentum space LDOS can also be estimated with the quantum Fourier transform (QFT) \cite{QFT} as $[\text{QFT}\cdot S(\omega)\cdot\text{QFT}^\dagger]_{kk}$.

In disordered materials, both the longitudinal and transverse (Hall) conductivities are relevant \cite{Lee_1985}. The static conductivity tensor in the linear response regime is computed via the Kubo-Bastin formula \cite{Bastin_1971, Streda_1982}:
\begin{equation}
    \sigma^{xy} = \frac{\im e^2 \hbar}{V_{at}N} \int d\omega f_{\beta}(\omega) \Tr \left[ v^x S(\omega) v^y \frac{\partial G^R(\omega)}{\partial\omega} - \text{h.c.}\right]
\end{equation}
where we note that the velocity operator matrix elements $[v^x]_{ij}$ also decay exponentially with distance $d_{ij}$, assuming exponentially localized bases. The block-encoding of the velocity operator thus resembles that of the Hamiltonian, but with sub-normalization $\alpha_v$ tied to sparsity and maximum velocity. One can evaluate the derivative by the identity $\partial G^R(\omega)/\partial\omega = -(G^R(\omega))^2$. The conductivity integral can be efficiently evaluated with Gauss quadrature, through the observation that the Chebyshev expansion of $G^R(\omega)$ as a polynomial of $h$ means it is also a polynomial in $\omega$, with the corresponding coefficients being matrix functions of $h$. A detailed discussion is in Appendix~\ref{app:quadrature}. 

\section{Discussions}
Our algorithms to obtain quantities of interest in non-interacting disordered systems inherit the advantages and caveats of quantum linear algebra. The techniques we use to estimate intensive quantities on lattices of size $N$ with additive error $\epsilon$ scale as $O(\text{polylog}(N)\frac{d}{\epsilon})$ up to a factor of sub-normalization, where $d$ is the sum of the polynomial degrees needed to approximate the involved operators. From the viewpoint of quantum linear algebra, our observables resemble the form of BQP-complete matrix functions \cite{Montanaro_2024, Santiago_2024}, and as a demonstration we prove that estimating the local density of states is BQP-complete in Appendix~\ref{app:BQP}. 

However, we emphasize that due to the geometrical locality of the Hamiltonians we study, this does not imply an exponential advantage. For the task of computing a single matrix element $\bra{i} f(h) \ket{j}$, and assuming $f$ can be approximated by a degree-$d$ polynomial, $f(h) \ket{j}$ is restricted to an effective light cone of volume $O(d^D)$ on a $D$-dimensional lattice. Existing classical algorithms we are aware of scale (at least) with this volume, whether using polynomial methods \cite{Wei_e_2006}, pole expansion and selected inversion \cite{pole, selinv}, or direct and iterative methods. While one can formulate random walk algorithms that do not iterate through the entire light cone, without non-trivial assumptions, their variance scales exponentially with the number of steps \cite{Montanaro_2024}.  Dequantization algorithms, such as quantum-inspired low-rank dequantization \cite{Chia_2020} or tensor-network-based dequantization \cite{Oseledets_2010, Oseledets_2011}, do not immediately apply either, due to the extensive Frobenius norm and incompressibility of a disordered instance. 
Directly comparing to classical kernel polynomial methods \cite{Wei_e_2006} that utilize the light cone, we obtain an $O(\epsilon d^D)$ advantage, since classically one has to expand the light cone $d$ times up to the maximal volume. On the other hand, when using physical hopping matrices and studying particular disorder phenomena for a given $\epsilon$, it may be possible to apply other classical estimation techniques adapted to the specific problem and accuracy requirement. 


Some particularly interesting directions to explore in the future include extensions to systems with correlated disorder, as well as the inclusion of interactions. In the latter case, the algorithms to compute Green's functions and traces provide the fundamentals for such developments.

\section{Acknowledgements}
We thank Tomislav Begu\v{s}i\'{c}, David Huse, Jiaqing Jiang, Chenghan Li, Urmila Mahadev, Sam McArdle, David Reichman, Yizhi Shen, and Samson Wang for insightful discussions. 

\begin{appendix}



\section{Polynomial approximations of observables}
\label{app:chebyshev}
In this appendix, we consider polynomial approximations of the Fermi-Dirac distribution and the retarded Green's function. In particular,  we consider error bounds on the Chebyshev expansion. Let $f$ be a function in the domain $[-1, 1]$, and suppose one can analytically extend it to the interior of the Berstein ellipse in the complex plane defined as
\begin{equation}
    E_{\rho} = \left\{ \frac{1}{2}(z + z^{-1}) : |z| \leq \rho \right\}
    \label{eq:ellipse}
\end{equation}
for some $\rho > 1$. Equivalently, this is the region where for $z= x+\im y$, $x$ and $y$ satisfy
\begin{equation}
    \left(\frac{x}{a}\right)^2 + \left(\frac{y}{b}\right)^2 \leq 1
\end{equation}
where
\begin{equation}
    a = \frac{\rho + \rho^{-1}}{2}, \quad b = \frac{\rho - \rho^{-1}}{2},
\end{equation}
\begin{theorem}
Suppose for all $z$ inside $E_\rho$, one has $|f(z)| \leq M$, then the $k$-th Chebyshev coefficient $a_k$ of $f(z)$ is bounded by $|a_k| \leq 2M\rho^{-k}$, and its degree-$d$ Chebyshev expansion $p$ satisfies
\begin{equation}
    \epsilon := \|f(x) - p(x)\|_{[-1,1]} \leq \frac{2M\rho^{-d}}{\rho - 1}.
\end{equation}
\label{theorem:chebyshev_ellipse}
\end{theorem}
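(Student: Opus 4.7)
The approach I would take is the standard contour-integral proof for Chebyshev coefficient decay, transported from the interval $[-1,1]$ to the Bernstein ellipse via the Joukowski map.

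First I would make the substitution $x = (z+z^{-1})/2$, which maps the unit circle $|z|=1$ two-to-one onto $[-1,1]$ and, more generally, maps $|z|=r$ for $r>1$ onto the boundary of the Bernstein ellipse $E_r$ as defined in Eq.~\eqref{eq:ellipse}. Under this substitution the Chebyshev polynomials become $T_k(x) = (z^k + z^{-k})/2$, so writing $F(z) := f\bigl((z+z^{-1})/2\bigr)$ converts the (formal) Chebyshev series $f(x) = \sum_{k\ge 0} a_k T_k(x)$ into the symmetric Laurent series
\begin{equation}
F(z) \;=\; \tfrac{1}{2}a_0 \;+\; \sum_{k\ge 1} \tfrac{1}{2}a_k \bigl(z^k + z^{-k}\bigr).
\end{equation}
Because $f$ is analytic inside $E_\rho$ and bounded there by $M$, the function $F$ is analytic in the annulus $\rho^{-1} < |z| < \rho$ and satisfies $|F(z)| \le M$ throughout it.

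Next I would extract the Laurent coefficients with Cauchy's formula. For any $r$ with $1 < r < \rho$, integrating over the circle $|z|=r$ gives $\tfrac{1}{2}a_k = \frac{1}{2\pi\im}\oint_{|z|=r} F(z)\, z^{-k-1}\, dz$ for $k\ge 1$, and the ML-estimate immediately yields $|a_k| \le 2M r^{-k}$. Letting $r \uparrow \rho$ (using that $|F|\le M$ on every interior ellipse by hypothesis) produces the claimed coefficient bound $|a_k| \le 2M\rho^{-k}$; the case $k=0$ is handled identically with the factor $1/2$ absorbed appropriately. For the truncation error, since $\|T_k\|_{[-1,1]} = 1$, the tail of the expansion satisfies
\begin{equation}
\|f - p\|_{[-1,1]} \;\le\; \sum_{k=d+1}^{\infty} |a_k| \;\le\; 2M \sum_{k=d+1}^{\infty} \rho^{-k} \;=\; \frac{2M\rho^{-d}}{\rho - 1},
\end{equation}
which is the stated bound.

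The one step requiring care is justifying that the Chebyshev series actually converges to $f$ on $[-1,1]$ so that the tail bound controls the uniform error, rather than merely bounding coefficients of a formal series. This follows from the coefficient decay itself: $|a_k| \le 2M\rho^{-k}$ with $\rho>1$ gives absolute and uniform convergence on $[-1,1]$, and the limit must equal $f$ by uniqueness of the Chebyshev expansion of an analytic function (or, equivalently, by uniqueness of Laurent expansions of $F$ on the annulus). The remaining subtlety is the limit $r\uparrow\rho$, which is handled by the hypothesis that $|f|\le M$ holds throughout the open region bounded by $E_\rho$, so $|F|\le M$ on each interior circle uniformly in $r<\rho$.
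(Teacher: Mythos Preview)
Your argument is correct and is precisely the standard Joukowski--Laurent proof of this result. The paper does not actually give a proof: its entire argument is the sentence ``This is part of Theorems 8.1 and 8.2 from \cite{chebyshev_approx_bound},'' i.e.\ a citation to Trefethen's text, whose proof is essentially the one you wrote out. So you have supplied the underlying derivation the paper defers to a reference; there is no methodological difference to compare.

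One cosmetic slip: in your displayed Laurent series the constant term should be $a_0$, not $\tfrac{1}{2}a_0$, since $T_0(x)=1=(z^0+z^{-0})/2$. This does not affect anything---the Cauchy bound on the $z^0$ coefficient gives $|a_0|\le M\le 2M\rho^0$ directly---and you already flag that $k=0$ needs its own bookkeeping.
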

\begin{proof}
    This is part of Theorems 8.1 and 8.2 from \cite{chebyshev_approx_bound}.
\end{proof}

We now prove the error bounds for Chebyshev expansions of relevant functions for the QSVT. First, we consider the Fermi-Dirac distribution.
\begin{theorem}
For the Fermi-Dirac distribution
\begin{equation}
    f(x) = \frac{1}{e^{\beta x}+1},
\end{equation}
defined on $x\in[-1,1]$, there is a polynomial $p_d$ of degree $d=O(\beta \log(\frac{\beta}{\epsilon}))$ such that $\|f - p_{d}\|_{[-1,1]} \leq \epsilon$ and $\|p_d\|_{[-1,1]}\leq 1+\epsilon$.
\end{theorem}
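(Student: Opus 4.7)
\bigskip

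The plan is to apply Theorem~\ref{theorem:chebyshev_ellipse} to the Fermi-Dirac function $f(x) = 1/(e^{\beta x}+1)$ after a careful choice of Bernstein ellipse $E_\rho$ that stays a safe distance from the singularities of $f$.

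First I would locate the poles of $f$ in the complex plane. They sit at $z = \im (2k+1)\pi/\beta$ for $k \in \mathbb{Z}$, i.e.\ purely imaginary points, with the closest pair at $\pm \im \pi/\beta$. Thus I would pick $\rho = 1 + \pi/(2\beta)$, so that the semi-minor axis $b = (\rho - \rho^{-1})/2 < \pi/(2\beta)$ keeps $E_\rho$ safely inside the pole-free horizontal strip $|\mathrm{Im}\,z| < \pi/\beta$.

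Next I would bound $M := \max_{z \in E_\rho}|f(z)|$. Writing $z = x + \im y$ with $|y| \leq b$, a direct computation gives
\begin{equation}
    |e^{\beta z}+1|^2 = e^{2\beta x} + 2e^{\beta x}\cos(\beta y) + 1.
\end{equation}
Since $|\beta y| \leq \beta b < \pi/2$, the cosine term is nonnegative, so $|e^{\beta z}+1|^2 \geq 1$ everywhere on $E_\rho$, yielding $M \leq 1$. Plugging $M=1$ and $\rho - 1 = \pi/(2\beta)$ into Theorem~\ref{theorem:chebyshev_ellipse}, the degree-$d$ Chebyshev truncation $p_d$ obeys
\begin{equation}
    \|f - p_d\|_{[-1,1]} \leq \frac{4\beta}{\pi}\left(1 + \frac{\pi}{2\beta}\right)^{-d}.
\end{equation}
Using $\log(1+\pi/(2\beta)) \geq \pi/(4\beta)$ for $\beta$ large enough, requiring the right-hand side to be at most $\epsilon$ gives $d = O(\beta \log(\beta/\epsilon))$, as claimed.

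Finally, for the sup-norm bound on $p_d$, I would note that $0 < f(x) < 1$ on $[-1,1]$, so by the triangle inequality $\|p_d\|_{[-1,1]} \leq \|f\|_{[-1,1]} + \epsilon \leq 1+\epsilon$. The only delicate step is the bound on $|f(z)|$ on the full ellipse; the reason I expect it to be the main obstacle is that one must be careful that $\cos(\beta y)$ stays nonnegative, which forces the specific choice $\beta b < \pi/2$ rather than the naive ``just avoid the poles'' condition $\beta b < \pi$. Once that geometric choice is made, the constant $M$ collapses to $O(1)$ and the remainder of the argument is a short exponential-decay calculation.
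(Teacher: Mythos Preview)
Your proposal is correct and follows essentially the same Bernstein-ellipse argument as the paper's proof: locate the nearest poles at $\pm\im\pi/\beta$, choose $\rho=1+\Theta(1/\beta)$ so the ellipse stays inside the pole-free strip, bound $M=O(1)$, and invoke Theorem~\ref{theorem:chebyshev_ellipse}. The only differences are cosmetic constants---you take $\rho=1+\pi/(2\beta)$ and obtain the clean bound $M\le 1$ via $\cos(\beta y)\ge 0$, whereas the paper takes $\rho=1+1/\beta$ and bounds $M<1.1$---and both lead to the same $d=O(\beta\log(\beta/\epsilon))$.
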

\begin{proof}
We extend $f$ analytically to the complex plane with a new variable $z=x+\im y$. The closest singularity for $f(z)$ is at $x=0$ and $y = \pm \pi/\beta$, therefore we can choose $\rho = 1+1/\beta$ so that for $x=0$, $|y| < 2/\beta <\pi/\beta$. Since $|\beta y| < 2$, we have $M\leq\max_{x} 1/(e^{\beta x}e^{2\im}+1) \leq 1/\sin(\pi-2) < 1.1$, then by Theorem.~\ref{theorem:chebyshev_ellipse} we have
\begin{equation}
\begin{split}
    \epsilon &\leq 2M\beta \left(1 + \frac{1}{\beta}\right)^{-d} \\
    d &\leq (\beta+1)\log\left(\frac{2M\beta}{\epsilon}\right)
\end{split}
\end{equation}
where we used $\log(1+1/\beta) > 1/(\beta+1)$. This concludes that $d = O(\beta \log(\frac{\beta}{\epsilon}))$.
\end{proof}
To cancel the sub-normalization $\alpha$ we therefore must approximate $1/(e^{\alpha \beta x}+1)$ instead, consequently, $d = O(\alpha\beta \log(\frac{\alpha\beta}{\epsilon}))$. 

We now consider the retarded Green's function. We note that near the boundary $\pm 1$, the function with half-width $\eta$ will lie outside of the window, potentially increasing the degree required. We can fix this by by increasing $\alpha$ slightly, e.g. replacing $\alpha$ with $2\alpha$, thus the spectrum will lie within some smaller region $[-c,c]$, e.g. $[-0.5,0.5]$ (thus also $\omega$).
\begin{theorem}
For the $\eta$-scaled retarded Green's function
\begin{equation}
    f(x) =  \frac{\eta}{(\omega + \im \eta)-x},
\end{equation}
defined on $x\in[-1,1]$, where $\omega \in [-c,c]$ for constant $c < 1$, there is a polynomial $p_d$ of degree $d=O(\frac{1}{\eta}\log(\frac{1}{ \epsilon}))$ such that $\|f - p_{d}\|_{[-1,1]} \leq \epsilon$ and $\|p_d\|_{[-1,1]}\leq 1+\epsilon$.    
\end{theorem}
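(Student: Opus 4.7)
The plan is to apply Theorem~\ref{theorem:chebyshev_ellipse} to the complex extension of $f(z) = \eta/((\omega + \im \eta) - z)$, following the same Bernstein-ellipse strategy used for the Fermi--Dirac case. The function $f$ is rational with a single pole at $z_0 = \omega + \im \eta$, which lies a vertical distance $\eta$ above the real axis. The task is therefore to pick $\rho$ so that the ellipse $E_\rho$ excludes $z_0$ while still satisfying $\log \rho = \Omega(\eta)$, so that the theorem delivers the advertised $d = O(\eta^{-1}\log(1/\epsilon))$ bound.

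The natural choice is to set the semi-minor axis $b = (\rho - \rho^{-1})/2$ equal to $\eta/2$, which gives $\rho - 1 = \Theta(\eta)$ and $\log \rho = \Theta(\eta)$. Every point $z = x + \im y$ inside $E_\rho$ then satisfies $|y| \leq b = \eta/2$, so the vertical separation from the pole obeys $|y - \eta| \geq \eta/2$. This immediately yields
\begin{equation*}
    |f(z)| = \frac{\eta}{|z - z_0|} \leq \frac{\eta}{\eta/2} = 2
\end{equation*}
uniformly on $E_\rho$, so Theorem~\ref{theorem:chebyshev_ellipse} applies with $M = 2$ and gives $\|f - p_d\|_{[-1,1]} \leq 4\rho^{-d}/(\rho - 1)$. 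Inverting this inequality and using $\log \rho = \Theta(\eta)$ produces $d = O(\eta^{-1}\log(1/\epsilon))$, with the logarithmic correction from $\log(1/(\rho-1))$ absorbed into $\log(1/\epsilon)$ in the regime of interest.

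For the uniform bound on the polynomial itself, I would note that for real $x \in [-1,1]$,
\begin{equation*}
    |f(x)| = \frac{\eta}{\sqrt{(\omega - x)^2 + \eta^2}} \leq 1,
\end{equation*}
so $\|f\|_{[-1,1]} \leq 1$, and the triangle inequality together with the approximation error gives $\|p_d\|_{[-1,1]} \leq 1 + \epsilon$.

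The only delicate point is the calibration of $\rho$: one must keep the pole strictly outside the ellipse while preventing $\log \rho$ from becoming smaller than $\Theta(\eta)$. The hypothesis $|\omega| \leq c < 1$ is what guarantees that the pole's horizontal position stays bounded away from the ellipse's extremes at $\pm a$, so the simple bound $|y - \eta| \geq \eta/2$ is enough and no more refined geometric argument on the ellipse is needed. Everything else reduces to routine manipulations of Theorem~\ref{theorem:chebyshev_ellipse}.
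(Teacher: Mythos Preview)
Your argument is correct and proceeds along a genuinely different line from the paper. The paper does not invoke Theorem~\ref{theorem:chebyshev_ellipse} at all for this function: since $f$ has a single simple pole, it instead writes the Chebyshev coefficients in closed form,
\[
a_k \;=\; \frac{\pi\eta}{\sqrt{(\omega+\im\eta)^2-1}\,\bigl(\omega+\im\eta \pm \sqrt{(\omega+\im\eta)^2-1}\bigr)^{k}},
\]
and then bounds $|a_k| = O(\eta\,(1+\eta)^{-k})$ directly. The extra factor of $\eta$ in that coefficient bound is precisely what cancels the geometric-series prefactor $1/(\rho-1)\sim 1/\eta$ when the tail is summed, so the paper lands cleanly on $d = O(\eta^{-1}\log(1/\epsilon))$. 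Your Bernstein-ellipse route, with $M=2$ and $\rho-1=\Theta(\eta)$, gives instead $\epsilon \le 4\rho^{-d}/(\rho-1)$ and hence $d = O(\eta^{-1}\log(1/(\eta\epsilon)))$; your remark that the extra $\log(1/\eta)$ is ``absorbed into $\log(1/\epsilon)$'' is fine in the regime where $\epsilon$ is polynomially small in $\eta$, but as a two-parameter statement it is strictly weaker than what the theorem claims. A small side point: your argument bounds $|z-z_0|$ purely through the imaginary part and therefore never actually uses the hypothesis $|\omega|\le c<1$; in the paper's proof that hypothesis is genuinely needed, to keep $|\sqrt{(\omega+\im\eta)^2-1}|$ bounded away from zero in the denominator of $a_k$.
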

\begin{proof}
Since this is a function of a simple pole, the Chebsyhev coefficients are analytically calculable \cite{Elliott1964TheEA} as
\begin{equation}
\begin{split}
    a_k &= \int_{-1}^1 dx \frac{T_k(x)}{\sqrt{1-x^2}} \frac{\eta}{(\omega + \im \eta)-x)}\\
    &= \frac{\pi\eta}{\sqrt{(\omega+\im \eta)^2-1}(\omega+\im \eta \pm \sqrt{(\omega+\im \eta)^2-1})^k}
\end{split}
\label{eq:GR_chebyshev_coeffs}
\end{equation}
where $\pm$ is chosen so that $|\omega+\im \eta \pm \sqrt{(\omega+\im \eta)^2-1} | > 1$. For $\omega \in [-c,c]$ for some $c < 1$, $|\sqrt{(\omega+\im \eta)^2-1}|$ is lower bounded by a constant independent of $\omega$ and $\eta$, and $|\omega+\im \eta \pm \sqrt{(\omega+\im \eta)^2-1}| \ge 1+\eta$. Then the worst-case scaling is $|a_k| = O(\eta (1+\eta)^{-k})$, which means $d = O(\frac{1}{\eta}\log(\frac{1}{ \epsilon}))$.
\end{proof}
Again to cancel the sub-normalization $\alpha$, we must consider
\begin{equation}
    \frac{\eta}{(\omega+ \im \eta) -\alpha x} = \frac{\eta/\alpha}{(\omega/\alpha+ \im \eta/\alpha) -x}
\end{equation}
and consequently $d = O(\frac{\alpha}{\eta}\log(\frac{1}{\epsilon}))$. 

\section{Gauss quadrature for conductivity}
\label{app:quadrature}
The Gauss quadrature considers the Lagrange interpolation on the function $f(x)$
\begin{equation}
    f(x) \approx p(x)= \sum_{j = 0}^d f(x_j) L_j(x)
\end{equation}
where $\{L_j\}$ are degree-$d$ polynomials such that $L_j(x_k) = \delta_{j, k}$, i.e., the Lagrange interpolating polynomials for a certain choice of grid points $\{x_j\}$. A common choice is the Chebyshev-Lobatto grid, defined as $x_j = \cos(j\pi/d)$ for $j=0,...,d$. Then the integral can be evaluated as
\begin{equation}
    \int_{-1}^{1} dx f(x) \approx \sum_{j=0}^d W_j f(x_j)
\end{equation}
where $W_j = \int_{-1}^{1} dx L_j(x)$ can be evaluated analytically for many choices of grids. The error of the integral is
\begin{equation}
\begin{split}
    \left|\int_{-1}^{1} dx f(x) - \sum_{j=0}^d W_j f(x_j) \right| &\leq \int_{-1}^1 \left|f(x) -p(x)\right|dx\\
    &\leq \|f(x)-p(x)\|_{[-1,1]} \int_{-1}^1 dx \\
    &= 2 \|f(x)-p(x)\|_{[-1,1]},
\end{split}
\end{equation}
thus as long as $p(x)$ is a good polynomial approximation to $f(x)$, the integration error is also small. We consider the error associated with the Chebyshev-Lobatto grids, which have an error bound very similar to the Chebyshev expansion through the following theorem
\begin{theorem}
Suppose for all $z$ inside $E_\rho$ defined in Eq.~\ref{eq:ellipse}, one has $|f(z)| \leq M$, then the degree-$d$ Lagrange interpolation $p(x)$ of $f(x)$ on Chebyshev-Lobatto on $\{x_j\}$ defined as $x_j = \cos(j\pi/d)$ satisfies
\begin{equation}
    \epsilon := \|f(x) - p(x)\|_{[-1,1]} \leq \frac{4M\rho^{-d}}{\rho - 1}.
\end{equation}
\label{theorem:chebyshev_ellipse_int}
\end{theorem}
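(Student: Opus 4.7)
The plan is to invoke the classical connection between polynomial interpolation at Chebyshev--Lobatto nodes and the Chebyshev series expansion of $f$, and then bound the error by combining the tail-of-series estimate already used in Theorem~\ref{theorem:chebyshev_ellipse} with an aliasing correction. Concretely, I would begin by writing $f(x) = \sum_{k=0}^{\infty} a_k T_k(x)$, where by Theorem~\ref{theorem:chebyshev_ellipse} the coefficients satisfy $|a_k| \leq 2M\rho^{-k}$, and by writing the degree-$d$ Lobatto interpolant as $p(x) = \sum_{k=0}^{d} c_k T_k(x)$.

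Next I would exploit discrete orthogonality of the Chebyshev polynomials on the grid $x_j = \cos(j\pi/d)$. Because $T_n(x_j) = \cos(nj\pi/d)$, the trivial identity $\cos((2md\pm k)j\pi/d) = \cos(kj\pi/d)$ shows that on this grid $T_{2md\pm k}$ and $T_k$ are indistinguishable, which produces the standard aliasing formula
\begin{equation}
    c_k = a_k + \sum_{m=1}^{\infty}\left(a_{2md-k}+a_{2md+k}\right),
\end{equation}
up to the conventional halving at the boundary indices $k=0$ and $k=d$ (which only sharpens the bound and does not affect its form). Subtracting from $f$ therefore splits the error as
\begin{equation}
    f(x)-p(x) = \sum_{k=d+1}^{\infty} a_k T_k(x) \;-\; \sum_{k=0}^{d}\sum_{m=1}^{\infty}\bigl(a_{2md-k}+a_{2md+k}\bigr)\,T_k(x).
\end{equation}

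Now I would use $|T_k(x)|\leq 1$ on $[-1,1]$ together with the exponential coefficient bound $|a_k|\leq 2M\rho^{-k}$. The first sum gives the usual geometric tail $\sum_{k>d}2M\rho^{-k} = 2M\rho^{-d}/(\rho-1)$. For the aliasing sum I would reindex with $k' = 2md\pm k$ and observe that every index $k' > d$ appears at most twice, producing a second contribution bounded by the same quantity $2M\rho^{-d}/(\rho-1)$. Summing the two yields the claimed bound $4M\rho^{-d}/(\rho-1)$.

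The main obstacle — to the extent that there is one — is bookkeeping: verifying the aliasing identity with the correct normalization at the endpoints $k=0$ and $k=d$, where the discrete inner product on Lobatto nodes picks up factors of $1/2$, and checking that the reindexing of the aliased sum really only doubles the tail estimate rather than introducing an extra logarithmic or constant overcount. Both are routine once set up carefully, so I expect no substantive difficulty beyond tracking constants to confirm the factor of $4$ rather than something larger.
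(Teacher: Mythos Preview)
Your argument is correct and is in fact the standard proof of this result. The paper, however, does not give any argument at all: its entire ``proof'' is a one-line citation to Theorem~8.2 of \cite{chebyshev_approx_bound} (Trefethen's approximation-theory text). So you have supplied strictly more than the paper does.

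For what it is worth, the argument you sketch---expand $f$ in its Chebyshev series, use the aliasing identity $T_{2md\pm k}(x_j)=T_k(x_j)$ on the Lobatto grid to write the interpolant's coefficients as $c_k=a_k+\sum_{m\ge 1}(a_{2md-k}+a_{2md+k})$, and then bound both the series tail and the aliased terms by the same geometric sum $\sum_{k>d}2M\rho^{-k}$---is precisely how the cited reference proves it. Your caveat about the endpoint normalizations at $k=0$ and $k=d$ is the right thing to flag; those halvings only help, and the ``each aliased index appears at most twice'' count survives them, giving the factor of $4$ exactly as stated.
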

\begin{proof}
    This is part of Theorem 8.2 from \cite{chebyshev_approx_bound}.
\end{proof}
Thus the degrees obtained in Section.~\ref{app:chebyshev} directly apply. In the context of the conductivity, we are interested in evaluating $\int \tilde{\sigma}(\omega) d\omega$ where
\begin{equation}
    \tilde{\sigma}(\omega) \propto  f_{\beta}(\omega)\frac{1}{N} \Tr \left[ v^x \text{Im}G^R(\omega) v^y  (G^R(\omega))^2\right].
\end{equation}
Each eigenvalue of $G^R$ can be viewed as a function of $\omega$, thus one can consider the matrix expansion
\begin{equation}
    \tilde{G}^R(\omega) = \sum_{j=0}^{d_g} G^R(\omega_j) L_j(\omega)
\end{equation}
where $d_g = O(\frac{1}{\eta}\log(\frac{1}{\eta\epsilon}))$, so that
\begin{equation}
    \max_{\omega} \|G^R(\omega) - \tilde{G}^R(\omega)\|_2 \leq \epsilon.
\end{equation}
If we substitute in $\tilde{\sigma}(\omega)$ with the polynomial approximations $p_f(\omega)\approx f_\beta(\omega)$ and $\tilde{G}^R(\omega)$, both with uniform error $\epsilon$, then the total error is
\begin{equation}
\begin{split}
    \epsilon_{total} = &\epsilon \frac{1}{N} \Tr \left[ v^x \text{Im}\tilde{G}^R(\omega) v^y  (\tilde{G}^R(\omega))^2\right] \\
    +& \epsilon p_f(\omega)\frac{1}{N} \Tr \left[ v^x v^y  (\tilde{G}^R(\omega))^2\right] \\
    +& 2\epsilon p_f(\omega)\frac{1}{N} \Tr \left[ v^x \text{Im}\tilde{G}^R(\omega) v^y  \tilde{G}^R(\omega)\right] \\
    +& ...
\end{split}
\end{equation}
where $\Tr(...)/N = O(1)$ for block-encodable matrices. Thus $\epsilon_{total} = O(\epsilon)$, suggesting the quadrature rule requires a similar amount of grid points to the QSVT query complexity. Note that in practice, one can consider alternatives of the Chebyshev-Lobatto grids to obtain the best convergence.

\section{BQP-completeness of estimating local density of states}
\label{app:BQP}
The $\pi \eta$-scaled spectral function is defined as
\begin{equation}
    S'(\omega,\eta,h) = -\eta \text{Im} G^R = \frac{\eta^2}{(\omega - h)^2+\eta^2}
\end{equation}
We consider the following problem of estimating the local density of states, which is of central interest in the study of Anderson localization.\\

\noindent
\problem{Estimating local density of states}{A $N \times N$ Hermitian matrix $h$ with norm $\|h\|_2 \leq 1$ with at most $4$ elements on each row and accessible through sparse access, a real number $\omega$, a positive number $\eta$, an index $j$, a precision $\epsilon$ and a threshold $g$, such that $\omega \in [-1,1]$, $\frac{1}{\eta} = O(\text{polylog}{(N)})$, $\frac{1}{\epsilon}=O(\text{polylog}{(N)})$, $g = O(1)$.}{$b$ is an upper bound on $\opNorm{A}$, $m$ and $\frac{1}{\varepsilon}$ are $\polylog{N}$.}{YES if $S'(\omega,\eta,h)_{jj} \geq g + \epsilon$, NO if $S'(\omega,\eta,h)_{jj} \leq g - \epsilon$.}

and we prove the following theorem.

\begin{theorem}
    The problem of estimating local density of states is BQP-complete.
\end{theorem}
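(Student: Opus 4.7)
The theorem has two directions. For BQP containment, the plan is to chain together the tools from the main text. Given $\|h\|_2 \le 1$ and row-sparsity $4$ accessible via sparse access, standard sparse-matrix block-encoding yields a unitary $U_h$ for $h$ with sub-normalization $\alpha = O(1)$. The Lorentzian $f(x) = \eta^2/((x-\omega)^2 + \eta^2)$ is bounded by $1$ on $[-1,1]$ and has its nearest singularities at $x = \omega \pm \im \eta$, so Theorem~\ref{theorem:chebyshev_ellipse} produces a Chebyshev approximant $p$ of degree $d = O((1/\eta)\log(1/\epsilon))$ with uniform error at most $\epsilon/4$. Applying QSVT gives a block-encoding of $p(h)$ using $O(d)$ queries to $U_h$, after which QAE on the ancilla-controlled preparation of $\ket{j}$ estimates $\bra{j} p(h) \ket{j}$ to additive error $\epsilon/4$ in $O((1/\epsilon)\log(1/\delta))$ further queries. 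Since both $1/\eta$ and $1/\epsilon$ are $\text{polylog}(N)$ by assumption, the total cost is $\text{polylog}(N)$, placing the decision problem in BQP.

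For BQP-hardness, I would reduce from the canonical BQP-complete problem: given a quantum circuit $C = U_T \cdots U_1$ of depth $T = \text{poly}(n)$ on $n$ qubits with the standard promise on $|\bra{0^n} C \ket{0^n}|^2$, decide whether the circuit accepts. The plan is a Feynman--Kitaev construction on a $(T{+}1)\cdot 2^n$-dimensional Hilbert space with a unary clock encoding, defining
\begin{equation}
    H_{FK} = H_{\text{prop}} + H_{\text{in}},
\end{equation}
where $H_{\text{prop}}$ enforces propagation by $U_t$ and $H_{\text{in}}$ penalizes non-$\ket{0^n}$ input at clock time $0$. With a single-qubit-gate universal set and a unary clock, one can arrange each row of $H_{FK}$ to have at most $4$ nonzero entries. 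Kitaev's lemma guarantees a spectral gap $\Delta = \Omega(1/T^2)$ with unique zero-energy ground state $\ket{\eta} = (T{+}1)^{-1/2} \sum_{t=0}^T \ket{t} \otimes U_t \cdots U_1 \ket{0^n}$. After rescaling $H_{FK}$ by a constant so that $\|h\|_2 \le 1$, set $\omega = 0$, take $j$ to be the basis index of $\ket{T} \otimes \ket{0^n}$, and choose the Lorentzian width $\eta_L$ much smaller than the rescaled gap. The Lorentzian then isolates the ground-state contribution and
\begin{equation}
    S'(0, \eta_L, h)_{jj} \approx |\braket{j}{\eta}|^2 = \frac{|\bra{0^n} C \ket{0^n}|^2}{T+1},
\end{equation}
separating YES from NO instances with gap $\Omega(1/T) = 1/\text{poly}(n) = 1/\text{polylog}(N)$, which fits within the allowed precision budget.

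The main obstacle will be the hardness direction, specifically the quantitative side conditions of the embedding: (i) reducing the row-sparsity pattern to at most $4$, which likely requires unary clock encoding together with a one-qubit-gate universal set plus ancillas, and splitting a few multi-term penalty operators across additional ancilla registers; (ii) choosing the rescaling so that $\|h\|_2 \le 1$ while the rescaled spectral gap remains $1/\text{polylog}(N)$ and can thus be resolved by a valid $\eta_L$; and (iii) controlling the Lorentzian tails from the excited subspace, which contribute $O(\eta_L^2 / \Delta^2)$ to the LDOS and must be made $o(\epsilon)$, so that the approximation $S'(0, \eta_L, h)_{jj} \approx |\braket{j}{\eta}|^2$ is tight enough to transfer the $\Omega(1/T)$ gap of the source problem to the promise gap of the LDOS instance. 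Containment is essentially immediate from the QSVT/QAE recipe already in the main text, so the substantive work lies in making the Feynman--Kitaev embedding tight enough that the $1/\text{poly}(n)$ gap of BQP maps cleanly onto the $1/\text{polylog}(N)$ promise of the LDOS problem.
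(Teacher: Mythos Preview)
Your containment argument matches the paper's. For hardness, however, the paper does \emph{not} use the Feynman--Kitaev Hamiltonian. Following the cited references, it builds the reflected circuit $C'=C^\dagger(Z\otimes I)C=V_{M-1}\cdots V_0$ with $M=2T{+}1$, forms the \emph{cyclic clock unitary} $W=\sum_l \ketbra{l{+}1}{l}\otimes V_l$, and takes $h=(W+W^\dagger)/2$. This $h$ has no diagonal entries, so with a $2$-sparse universal gate set it is exactly $4$-sparse by construction; its spectrum is known in closed form as $\{\pm\cos(2\pi l/M)\}$; and the overlap of the start state with each eigenspace equals $|\alpha_{x,0}|^2/M$ or $|\alpha_{x,1}|^2/M$ exactly. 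One then sets $\omega$ to an eigenvalue appearing only on the ``$|\alpha_{x,1}|^2$'' side and chooses $\eta$ as a fraction of the explicitly known level spacing, so the YES/NO separation follows from elementary Lorentzian bounds with no spectral-gap lemma at all. This buys both the sparsity bound and the tail control essentially for free.

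Your Feynman--Kitaev route is plausible but the obstacle you list as (i) is more serious than you suggest. The propagation Hamiltonian carries an unavoidable diagonal piece (schematically $\sum_t \ketbra{t}{t}\otimes I$), and $H_{\text{in}}$ is diagonal as well, so even with single-qubit gates you land at row-sparsity $\ge 5$, not $4$. You cannot simply drop $H_{\text{in}}$: without it the ground space of $H_{\text{prop}}$ is the full span of history states over \emph{all} inputs, and then the LDOS at $j=\ket{T}\otimes\ket{0^n}$ collapses to $\sum_y |\bra{0^n}C\ket{y}|^2/(T{+}1)=1/(T{+}1)$, independent of $C$, so the reduction dies. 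A uniform diagonal shift relocates $\omega$ but does not remove the $H_{\text{in}}$ contribution localized at $t=0$. Hence to meet the exact sparsity-$4$ promise you would need an additional gadget (or to weaken the constant to $O(1)$), and you would still owe a quantitative gap bound after rescaling; the paper's $(W+W^\dagger)/2$ construction sidesteps both issues.
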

\begin{proof}
The containment in BQP is implied by the algorithm in the main text. To prove BQP-hardness, we essentially adopt techniques from Ref. \cite{mono_BQP, Santiago_2024}. The starting point is to consider the following natural BQP-complete problem:\\

\noindent
\problem{BQP circuit simulation}{An $n$-bit string $x$ and a circuit $C = U_T \ldots U_1$, with $T=\text{poly}{(n)}$ and each $U_i$ acting non-trivially on at most 2 qubits, that acts on $r=\text{poly}{(n)} \geq n$ qubits as $C \ket{x} \ket{0}^{\otimes r-n} = \alpha_{x,0} \ket{0} \ket{\psi_{x,0}} + \alpha_{x,1} \ket{1} \ket{\psi_{x,1}}$, where $\ket{\psi_{x,0}}$, $\ket{\psi_{x,1}}$, $\alpha_{x,1}$, and $\alpha_{x,0}$ are all unknown except for the promise that either $|\alpha_{x,1}|^2 \geq \frac{2}{3}$ or $|\alpha_{x,1}|^2 \leq \frac{1}{3}$.}{Empty promise}{YES if $|\alpha_{x,1}|^2 \geq \frac{2}{3}$, NO if $|\alpha_{x,1}|^2 \leq \frac{1}{3}$.}

We use a standard method for proving BQP-hardness of matrix functions \cite{mono_BQP, Santiago_2024}. It aims to encode the circuit $C = U_T...U_1$ into an 4-sparse matrix $h$, so that reading out elements of some matrix function $f(h)_{jj}$ distinguishes between whether $|\alpha_{x,1}|^2 \geq \frac{2}{3}$ or $|\alpha_{x,1}|^2 \leq \frac{1}{3}$. We first define the extended circuit $C' = U_1^\dagger...U_T^\dagger (Z \otimes I^{r-1})U_T...U_1 = V_{M-1}...V_0$ where $M=2T+1$. We append $\lfloor \log M \rfloor + 1$ ancilla qubits and let the initial state be $\ket{s_{x}} = \ket{0}^{\lfloor \log M \rfloor + 1}  \ket{x} \ket{0}^{r-n}$, and define the corresponding clock unitary
\begin{equation}
    W = \sum_{l=0}^{M-1} \ketbra{l+1}{l} \otimes V_l
\end{equation}
with $\ketbra{l+1}{l}$ acting on $\ket{0}^{\lfloor \log M \rfloor + 1}$, and $\ket{M} = \ket{0}$. Our goal is to determine the spectrum of $W$. To do so, we define states $\ket{\phi_0^{b}}$ such that $C\ket{\phi_0^{b}} = \ket{b}\otimes\ket{\psi}$ for $b\in\{0,1\}$ for some arbitrary state $\ket{\psi}$, and for a sequence of states $\ket{\phi_k^b} = V_{k-1}...V_{0}\ket{\phi_0^b}$, one can verify that $V_{M-1}\ket{\phi_{M-1}^0}=\ket{\phi_0^0}$ and $V_{M-1}\ket{\phi_{M-1}^1}=-\ket{\phi_0^1}$. Then due to the cyclic structure, the eigenvalues and eigenstates are
\begin{equation}
\begin{split}
    &W\sum_{k=0}^{M-1}e^{-i2\pi lk/M}\ket{k}\otimes\ket{\phi_k^0}\\
    =& e^{i2\pi l/M}\sum_{k=0}^{M-1}e^{-i2\pi lk/M}\ket{k}\otimes\ket{\phi_k^0}
\end{split}
\end{equation}
for $l=0,...,M-1$, as well as
\begin{equation}
\begin{split}
    &W\sum_{k=0}^{M-1}e^{-i\pi (2l+1)k/M}\ket{k}\otimes\ket{\phi_k^1}\\ =& e^{i\pi(2l+1)/M}\sum_{k=0}^{M-1}e^{-i\pi (2l+1)k/M}\ket{k}\otimes\ket{\phi_k^1}.
\end{split}
\end{equation}
By choosing a set of complete orthogonal basis states $\{\ket{\psi}\}$ the eigenspace projector $P_l^+$ associated with eigenvalues $e^{i2\pi l/M}$ is
\begin{equation}
\begin{split}
    P_{l}^+= \frac{1}{M} \sum_{\psi}\sum_{k=0}^{M-1}\sum_{k'=0}^{M-1} e^{-i2\pi l(k-k')/M} \ketbra{k}{k'}\otimes\ketbra{\phi_k^0}{\phi_{k'}^0}.
\end{split}
\end{equation}
Then one can see that
\begin{equation}
\begin{split}
   \bra{s_x} P_l^+ \ket{s_x} =& \frac{1}{M}\sum_{\psi} \braket{s_x}{\phi_0^0}\braket{\phi_{0}^0}{s_x}\\
    =& \frac{1}{M}\sum_{\psi} \bra{s_x}C^\dagger (\ket{0}\otimes \ket{\psi})(\bra{0}\otimes \bra{\psi})C \ket{s_x}\\
    =& \frac{1}{M}\bra{s_x}C^\dagger (\ketbra{0}{0}\otimes I)C \ket{s_x}\\
    =& \frac{|\alpha_{0,x}|^2}{M}
\end{split}
\label{eq:P+}
\end{equation}
and similarly, the eigenspace $P_l^-$ associated with eigenvalues $e^{i\pi (2l+1)/M}$ has the property
\begin{equation}
    \bra{s_x} P_l^- \ket{s_x} = \frac{|\alpha_{1,x}|^2}{M}
\label{eq:P-}
\end{equation}
With this, we can define $h = (W+W^\dagger)/2$, which has at most $4$ elements on each row since $V_l$ belongs to a $2$-qubit universal gate set whose matrix has sparsity $2$. By the fact that 
\begin{equation}
\begin{split}
    &\frac{e^{i2\pi l/M} + e^{i2\pi (M-l)/M}}{2} = \cos\left(\frac{2\pi l}{M}\right) \\
    &\frac{e^{i\pi (2l+1)/M} + e^{i\pi (2M-1-2l)/M}}{2} = \cos\left(\frac{\pi (2l+1)}{M}\right)
\end{split}
\end{equation}
$h$ has eigenvalues $\pm \cos(2\pi l/M)$ for $l=0,...,(M-1)/2$, since
\begin{equation}
\begin{split}
    \cos\left(\frac{\pi(2l'+1)}{M}\right)&= \cos\left(\frac{\pi(2(\frac{M-1}{2}-l)+1)}{M}\right)\\ &= -\cos\left(\frac{2\pi l}{M}\right).
\end{split}
\end{equation}
Suppose $\ket{x}$ encodes the matrix index $j$, then $f(h)_{jj} = \bra{s_x}f(h)\ket{s_x}$. By grouping eigenspaces with associated eigenvalues and using Eq.~\ref{eq:P+} and Eq.~\ref{eq:P-}, we obtain
\begin{equation}
f(h)_{jj} = \frac{|\alpha_{x,0}|^2}{M}S^++\frac{|\alpha_{x,1}|^2}{M}S^-
\end{equation}
where $S^{\pm}$ are the sums
\begin{equation}
S^{\pm} = f(\pm1)+2\sum_{l=1}^{\frac{M-1}{2}}f\left(\pm\cos\left(\frac{2\pi l}{M}\right)\right).
\end{equation}
By substituting $|\alpha_{x,0}|^2 = 1- |\alpha_{x,1}|^2$ we have $f(h)_{jj} = S^+/M + |\alpha_{x,1}|^2(S^- - S^+)/M$. If we can distinguish $f(h)_{jj}$ for $|\alpha_{x,1}|^2 \ge 2/3$ and $|\alpha_{x,1}|^2 \leq 1/3$, extracting $f(h)_{jj}$ is BQP-hard. Now we consider the function $f(h) = S'(\omega, \eta, h)$. Our goal is to select $\omega$ as one of the eigenvalues in $S^-$ but not in $S^+$. By choosing a sufficiently small $\eta$, all other eigenvalues are evaluated to be close to $0$, ensuring that $S^+$ is small and $S^-$ is large. Due to the nature of the cosine function, we select $\omega$ as the eigenvalue closest to the origin. This choice ensures that the nearby eigenvalues are farther apart, minimizing their contributions to $S^+$. For even $T$, the eigenvalue closest to the origin (and presented in $S^-$) is $\omega = \omega^- := -\cos(\pi T / M)$, while for odd $T$, it is $\omega = \omega^+ := \cos(\pi T / M)$. Without loss of generality, we focus on the even $T$ case, as the odd case follows an identical argument. The closest eigenvalue has a distance of at least
\begin{equation}
\cos\left(\frac{\pi (T-1)}{M}\right) - \cos\left(\frac{\pi T}{M}\right) \geq \frac{\pi}{2M} =: \Delta.
\end{equation}
By choosing $\eta = c\Delta$ for some $c < 1$, and letting $x$ represent any eigenvalue other than $\omega^-$, the function $S'$ satisfies
\begin{equation}
S'(\omega^-, c\Delta, x) = \frac{c^2\Delta^2}{(\omega^- - x)^2 + c^2\Delta^2} \leq \frac{c^2}{1 + c^2} \leq c^2,
\end{equation}
while $S'(\omega^-, c\Delta, \omega^-) = 1$. This leads to bounds for $S^+$ and $S^-$:
\begin{equation}
0 \leq S^+ \leq Mc^2, \quad 2 \leq S^- \leq 2 + (M - 2)c^2.
\end{equation}
For the two cases $ |\alpha_{x,1}|^2 \geq 2/3 $ and $ |\alpha_{x,1}|^2 \leq 1/3 $, we have:
\begin{align}
&\frac{1}{M}S^+ + \frac{2}{3M}(S^- - S^+) \geq \frac{4}{3M} - \frac{2}{3}c^2,\\
&\frac{1}{M}S^+ + \frac{1}{3M}(S^- - S^+) \leq \frac{2}{3M} + \frac{2}{3}c^2.
\end{align}
By choosing $c \leq 1 / \sqrt{4M}$, we can distinguish between the two cases with $g = 1$ and an $\epsilon = 1 / (6M)$ estimation of $S'(\omega^-, c\Delta, h)_{jj}$.
\end{proof}

We note the techniques in \cite{Santiago_2024} allow one to determine the hardness of a wide range of matrix functions, for example, the Fermi-Dirac distribution can be analyzed similarly. 

\end{appendix}

\bibliography{references}

\end{document}